\documentclass{article}
\usepackage{cs_techrpt_cover,usenix,epsfig,endnotes,url,amsthm}

\title{A model and framework for reliable build systems}
\author{
{\rm Derrick Coetzee}\\
\and
{\rm Anand Bhaskar}\\
University of California, Berkeley \\
\{dcoetzee,bhaskar,necula\}@eecs.berkeley.edu
\and
{\rm George Necula}\\
% copy the following lines to add more authors
% \and
% {\rm Name}\\
%Name Institution
} % end author

\begin{document}
\maketitle

\let\thefootnote\relax\footnotetext{Also available as Technical Report No. UCB/EECS-2012-27. All rights to this work are released under the Creative Commons Zero Waiver (CC0). It may be used by anyone for any purpose without permission or condition. This is not a peer-reviewed work. Published 2012 February 17.}

\begin {abstract}
Reliable and fast builds are essential for rapid turnaround during development and testing. Popular existing build systems rely on correct manual specification of build dependencies, which can lead to invalid build outputs and nondeterminism. We outline the challenges of developing reliable build systems and explore the design space for their implementation, with a focus on non-distributed, incremental, parallel build systems. We define a general model for resources accessed by build tasks and show its correspondence to the implementation technique of \emph{minimum information libraries}, APIs that return no information that the application doesn't plan to use. We also summarize preliminary experimental results from several prototype build managers.
\end {abstract}

\section{Introduction}

Large software projects often reach thousands of files and millions of lines of source code. Build automation systems, or \emph{build systems} for short, are responsible for automating the execution of build tools such as compilers in order to process all the source code and produce the final, executable output. The time required to execute a build is a critical factor in a number of software engineering metrics such as: developer cycle time, frequency of continuous integration testing, throughput of check-in verification systems, and time to ship a critical patch; yet a 2003 survey showed that more than half of the 30 surveyed commercial projects had a clean, sequential build time of 5-10 hours.~\cite{ElectricCloudPatent} This motivates the development of builds that can run faster than a clean build.

To address this need, existing build systems provide two features: \emph{parallel builds}, in which multiple build tasks are executed simultaneously, and \emph{incremental builds}, in which results of previous builds are reused and only a subset of build tasks are run, based on what build inputs have changed. In both types of builds, the developer must explicitly specify \emph{dependencies} for each build task, describing other build tasks which must run before it. For example, in a C~project, C~source files must be compiled into object files before the object files can be linked into an executable binary. If even one dependency is omitted, the soundness of both parallel and incremental builds is compromised: build tasks may be run out of order, leading to incorrect re-use of out-of-date results, build failure due to missing results, and race conditions due to concurrent access to files. Whether a failure occurs, and which failure occurs, depends on which input files have changed and the build schedule selected by the build system. As a consequence, ``[m]ost organizations run their builds completely sequentially or with only a small speedup, in order to keep the process as reliable as possible.''~\cite{ElectricCloudPatent} If developers and organizations viewed their parallel, incremental builds as highly reliable, they could use them consistently throughout the development, testing, and release process, accelerating these processes and offloading the mental burden of build management.

Incomplete dependencies arise naturally whenever a developer change introduces a new dependency, but fails to correctly update the dependency information. As a simple example, consider the build described by this makefile:
\begin{verbatim}
all: generated.h foo

generated.h: config
    ./gen config -o generated.h

foo: foo.c
    gcc foo.c -o foo
\end{verbatim}
\noindent
Here, a tool called \emph{gen} is run to generate the header file generated.h from a file \emph{config}; then the binary foo is compiled from the C~source file foo.c. Now suppose the developer modified foo.c to include the header file generated.h, and also modified \emph{config}. A serial build will still produce the expected result, since generated.h is listed before foo in the ``all'' target; but an incremental or parallel build may run the \emph{gcc} action before, or simultaneously with, the \emph{gen} action, leading to incorrect output or build failure.

This work explores background and existing work in build systems and obstacles and design options for reliable build systems. It also presents a formal model for build system analysis and discusses some early experimental results with several prototypes.

\section{Background}

Dependencies in a build are described by a dependency graph, a directed acyclic graph (DAG) where build tasks (typically, invocations of a build tool) are vertices, and an edge from A to B indicates that B depends on A. Given such a graph and a uniform set of processors, deciding which tasks to run at what time is an instance of the \emph{DAG scheduling problem}, which is studied in the context of static scheduling of processes in high-performance computing. It is NP-complete even in the restricted case where there are two processors, no dependencies, and the run time of every task is known (closely related to the \emph{partition problem}), but a number of effective heuristics are available in practice.

A single-node build can be scheduled using a topological sort, which can be computed by a simple online algorithm: at each step, select an arbitrary vertex with no incoming edges to run, and when it completes, delete it. A similar algorithm can schedule parallel builds: whenever at least one processor is free, run an arbitrary task with no incoming edges, and whenever a build step finishes delete its vertex. It is possible that all tasks have incoming edges, in which case processors may remain idle until more tasks complete. This algorithm, used by \emph{make}, is a version of Graham's classical online list scheduling algorithm,~\cite{graham} and has the advantage of not requiring task runtimes, but does not take into account the critical path (the path of largest total time).

The technique can be improved by assigning priorities to nodes, using any of a number of heuristics, and then selecting the node with the highest priority at each step.~\cite{Wu01efficientlocal} Effective priority assignment requires task runtime estimates, which can be inferred from previous builds and/or a runtime model. This approach has not been yet tried.

\subsection{Shared state and resources}

To model builds we define the \emph{shared state space} $S$, typically representing the filesystem and other state visible to multiple tasks as well as the task input (e.g. command line, environment). A \emph{resource} is a function $r$ with domain $S$. Intuitively, a resource is anything that may be returned by a library function. Resources can range from simple predicates (``does this file exist?'') to values (``what are the contents of the file at this path?'') to complex operations (``what is the abstract syntax tree obtained after parsing the source file at this path?''). A resource can also encompass many files (such as the contents of all files in a subdirectory). Prior to starting the build, a fixed (typically infinite) \emph{resource space} is selected---no build process may access resources outside that set.

A build task performs a sequence of \emph{accesses} (reads or writes) to resources. During a parallel build, accesses by many tasks may be interleaved to form an access sequence, subject to the constraint that if $g$ depends on $f$, all accesses by $f$ precede all accesses by $g$. Reads make the current value of a resource accessible to the task executing it, while writes update the shared state in such a way that one or more resources are set to a new given value. Any resources not written to during a write must remain unmodified.

Build tasks must be \emph{deterministic}, in the sense that their accesses (including type, resource, and value written) depend only on the results of prior reads. Two tasks are said to \emph{conflict} (during a particular build) if one of them writes a resource that the other reads or writes. A given build is \emph{valid} if, for any pair of conflicting tasks, there is a directed path from one to the other in the dependency graph. It can be proven that if a given build is valid, it produces the same final result as any other parallel schedule, given the same initial shared state (see appendix~\ref{validbuildproof}). This allows us to meaningfully define a \emph{valid configuration} as a pair (dependency graph, start state) that produces valid builds.

To model an incremental build, suppose we start with initial state $s_i$, perform a build resulting in state $s_f$, modify the shared state to get $s'_f$, and then perform another build. For now, we assume that for every task $f$, $f$ has no effect when acting on $s_f$ --- that is, right after the first build is complete, re-running any one step will change nothing (in practice, this typically means retaining and not reusing intermediate files). Define the special task $d$ updating state $s_f$ to $s'_f$, representing the actions of the developer, and add edges from $d$ to all tasks that $d$ conflicts with. Now we assign $d$ the lowest priority and create a DAG schedule. This will move all nodes that don't conflict with $d$ before $d$, where they will have no effect, since they are acting on $s_f$. Effectively, this means the only part of the graph that needs to be scheduled is the transitive closure of $d$.

\subsection{Selecting a resource space}

There is a tradeoff in the choice of the resource space: if resources encompass too much state, there will be spurious conflicts. For example, a trivial resource space has a single resource returning the entire shared state. In this space, all reads conflicts with all writes, and the build must run sequentially.

On the other hand if resources are too fine-grained, the result will be that processes read and write a very large number of resources, resulting in excessive overhead for build management and a large dependency graph. For example, if every byte of every file had its own resource, a typical build task would access many thousands of resources.

One straightforward strategy is to create a single resource for the contents of each file on the disk. To account for the creation and deletion of files, there is a resource for every possible filepath, with a special value indicating the file does not exist or is inaccessible, analogous to a ``read file contents'' library function that returns NULL on failure. This simple resource system is similar to that used by \emph{make} and is sufficient for many builds.

Many applications require a notion of a collection/set resources, such as a directory. A naive representation would have a resource for the contents of each collection; but then two tasks creating files in the same directory would conflict. Such a collection is best represented as an infinite set of resources, one for each potential element of the collection, indicating whether or not that element is present (in the case of a directory, one for each filename, indicating whether that file exists in that directory). A process that reads the collection (e.g. listing the files in the directory) reads all of these resources (note that this requires a concise representation for certain infinite resource sets). A process that adds or removes items from the collection may only affect a few of them.

Although files are by far the most common resource, there are many examples of other resources that are useful. For example, the Linux kernel build has a single header containing all configuration options which is included by all source files. In order to make incremental builds useful in the event of configuration option changes, the Linux build tracks each option as a separate resource.

A set of resources in a resource space may be \emph{contracted} to form a merged resource which yields a tuple of all the resources used to form it. Such contracted resources allow a gradual tradeoff between the number of resources accessed and the number of conflicts that occur during the build---see section~\ref{taskandresourcegranularity} for more details.

\subsection{Hidden resources}
\label{hiddenresources}

There are resources that are used in practice by many tools but are not tracked by existing build managers, either by convention or because supporting them is difficult. These include:

\begin{itemize}
\item Compiler flags and tool configuration: if a build is done, and then tool configuration is altered, for example to enable debugging flags, all files must be rebuilt. If it is changed back, there is no need to rebuild everything again. Visual Studio implements solution configurations with separate output directories to cope with this, but these are rarely used for more than two configurations. Vesta~\cite{VestaBook} records outputs of many previous builds in its derived file cache.
\item Nonexistent files: A C source file reading "\texttt{\#include <stdio.h>}" will search the system include path in order to find the header. Developers often add project directories to this path. If a file named "stdio.h" were ever created along this path, it would change the result of the task, but most extant tools would not detect the need to rebuild. Vesta~\cite{VestaBook} and \emph{scons}~\cite{scons} track dependencies on nonexistent files.
\item Build configuration file: determining which part of a build needs to be rebuilt after changing the build configuration file itself (e.g. Makefile) is a difficult problem. Even small changes may affect all tasks or only a few, and determining which may require analyzing structural changes since the previous version.
\item Build tools, libraries, and system headers: upgrading build tools or libraries used by build tools, or copying a source tree to a machine with different tools, can dramatically alter build output, but these are usually untracked. Sometimes this results in an incompatible combination of files generated by different versions of tools. This motivates the common industry practice of including all build tools in the version control repository. As mentioned in section~\ref{taskandresourcegranularity}, it often makes sense to treat these files as a single aggregate resource.
\item Non-file resources: accesses to network resources, peripheral devices, the time, and so on are usually untracked. Some real-world builds retrieve files during the build from remote sources, query remote databases, or even do web service queries. These should be tracked as resources, even if coarsely.
\item Special files: some files like those under "/proc" and "/dev" in UNIX may fail to update their last modified time, or even change each time they are read.
\item Operating system: the results of system calls made to the kernel by build tools may affect build output. These results may vary depending on the specific operating system, operating system version and patches, filesystem and drivers, or even kernel configuration options. These are untracked by all extant systems, and largely benign given a carefully designed resource space and a standards-compliant operating system.
\end{itemize}

The choice of how to handle hidden resources depends on the resource space, the application, and build platform variability. Some applications may not use certain types of resources or may be built only on a fixed build server. In some cases, like the build configuration file, merely detecting any change and triggering a full rebuild may be sufficient in practice. In other cases, where changes are frequent, fine-grained resource tracking is needed.

\section{Related work}

\subsection{Build systems}

A small number of build systems dominate in practice today, most of them based on \emph{make}, created by Stuart Feldman in 1977 at Bell Labs.~\cite{make} With \emph{make}, the developer uses a domain-specific language to specify a series of targets, and each target may declare explicit dependencies on other targets and/or source files. Each target has an associated shell command that builds the target. This explicit representation of the dependency graph facilitates both incremental and parallel builds. However, dependencies must be specified correctly; if they are not, incremental builds may fail to rebuild portions of the application, leading to incorrect results with unpredictable behavior, and parallel builds may produce different outputs nondeterministically. Make is designed for use on a single machine, and build results are not shared between developers. A number of important dependencies are either difficult to represent or omitted by convention, such as the ones mentioned in section~\ref{hiddenresources}---changes in these may require a complete rebuild. Even incremental builds in \emph{make} take time proportional to the size of the build as a whole due to the need to process all targets and scan all input files for changes. This process can be accelerated by using file timestamps to detect changes, at the expense of correctness, since this is not reliable in general. Although some build systems like Apache Ant and MSBuild adopt XML build description files in place of \emph{make}'s domain-specific language, facilitating greater extensibility, they still inherit all of these issues.

One of the most developed research build systems is Compaq/Digital Systems Research Center's Vesta, developed in the late 1990s and released under the GNU LGPL in 2001.~\cite{VestaBook} Although Vesta does not support parallel builds, it provides incremental builds reliable enough to be used in practice for product releases (``every build is incremental''). It tracks dependencies that extant tools like \emph{make} incorrectly ignore, such as dependencies on build description files, compiler flags, nonexistent files, and build tools. Through the derived file cache, compilation outputs are easily reused between developers. Change detection and inferrence of dependencies is implemented using a custom filesystem, so that the filesystem does not need to be scanned to find modified source files, and a sophisticated functional build description language allows large portions of the build to be reused.~\cite{VestaCaching} Using its derived file cache, Vesta can reuse results not only from the previous build but from all previous builds, by treating tool executions as functions and memoizing their results (see their \emph{runtool cache}).

Vesta was deployed by large product teams at Compaq and Intel, but has not achieved widespread use. This can be attributed to several factors. One is that Vesta is a ``package deal,'' requiring teams who use it to also use Vesta's custom filesystem and version control, both of which are not as mature, featureful, or well-supported as existing systems. Migration of existing projects to Vesta while preserving change histories is difficult or impossible, and requires translating existing build description files into Vesta's very different language. Modern builds are done in parallel, even on single nodes, and large builds are done on clusters, neither of which Vesta supports. Finally, the cost of incorrect incremental builds is hidden: it is difficult to measure the time spent by developers resolving incorrect builds, or the time that might have been saved by building product releases incrementally.

A central feature of Vesta was \emph{repeatability}, in which all source files used in a particular build can always be retrieved at a later time, and used to repeat the same build. Although this feature is valuable (e.g. for isolating source changes leading to behavior changes), it is separable from the other features and depends critically on integration with version control, so it is disregarded in this report.

A very different approach to build systems was taken by Electric Cloud,~\cite{ElectricCloudPatent} which disregarded incremental builds in favor of using clusters of machines with parallel processors to speed up full builds as much as possible, currently deployed as an enterprise commercial product. A network filesystem infers dependencies, and visualization tools facilitate the identification of bottlenecks. Although fast and well-supported, Electric Cloud is not suitable for routine developer builds, does not scale down effectively to small projects, and is too expensive for many applications such as open-source development.

More recently, in 2012, Electric Cloud has released ElectricAccelerator Developer Edition,~\cite{ElectricCloudDE} which is designed to run on a single machine, infers dependencies, and implements accurate incremental and parallel builds, scaling up to four cores. Although this product effectively accomplishes the primary goals set out in this report, it chooses a single design and leaves room for improvement in numerous directions, such as tool cooperation, sharing of derived files, custom resources, and so on.

\subsection{Build augmentation}

A number of more practical efforts have sought to augment existing build tools by providing services to accelerate them or improve their reliability.

The GNU Make manual illustrates how to use the ``\texttt{-M}'' flag of gcc (the GNU C Compiler) to generate \emph{make} dependencies for C/C++ builds on-the-fly and keep them up-to-date automatically. These dependencies are incomplete, including only header and source files, but greatly increase reliability and reduce maintenance effort compared to manual specification for this specific type of build.

The \emph{ccache} tool,~\cite{ccache} based on \emph{compilercache},~\cite{compilercache} caches results of invocations of standard compiler tools like \emph{gcc}, even if the intermediate files are later deleted or overwritten. It can dramatically improve incremental build times for C/C++ projects, but does not generalize to other tools. \emph{scons}~\cite{scons} provides similar functionality.

Google relies on conventional distributed builds with coarse-grained tasks and manually-specified dependencies. Their efforts have focused on dramatically reducing the runtime of important build tools, such as the C/C++ linker, which is a bottleneck in large parallel builds because it is used in the final step to combine all results.~\cite{goldlinker}

\section{Build specification}

Systems like \emph{make} lean heavily on build specification via an explicit dependency graph. This has certain advantages: dynamic scheduling of incremental, parallel builds is straightforward as outlined above, and it's also intuitive to create build description files that include multiple targets and allow the developer to choose to build only a subset of them (and these targets may share dependencies).

One of the simplest ways to specify a build is with a sequence of shell commands, a basic shell script. Any sequential build is equivalent to such a script. Both incremental and parallel builds can be implemented in this setting by inferring dependencies from previous builds (see sections~\ref{offlinedependency}, \ref{transactions}). This scheme can be extended to include nonrecursive function calls and variables without adding significant complexity. It has the advantage of being intuitive and familiar to procedural programmers, but unlike explicit dependency graphs becomes less intuitive when building a subset of targets.

The most general type of build specification is the build program or build script. Such a script is written in a general-purpose language and may employ sophisticated abstraction mechanisms, algorithms, and data structures. Vesta's functional build language~\cite{VestaCaching} and \emph{scons}'s Python build descriptions~\cite{scons} are examples. In some cases it may even be integrated into the application being built, allowing the application to generate source code and rebuild itself or portions of itself. Incrementalism can be extracted using memoization, as in Vesta, and parallelism can be extracted using futures. Although the most flexible option, automatically extracting incremental and parallelism from a general build program is challenging and in some cases infeasible.

Some practical tools mix these approaches; \emph{make} for example incorporates basic variables and conditionals while remaining primarily based on dependency graphs. Other hybrids may be possible, such as a Makefile-like language where both dependency lists and actions can be program fragments in a general-purpose language. A major goal of future work is to design a build description language that can concisely represent typical builds in practice, minimizing opportunities for error, but remain flexible and scalable enough to accommodate large and complex builds.

\section{Capturing access to shared state}
\label{capturing}

Standard tools such as \emph{make} rely on the developer to manually specify all shared state which is accessed by each task, making the system unable to distinguish a valid build from an invalid one. There are several techniques for reliably, automatically capturing access to shared state.

\subsection{File system filtering}

It is straightforward to implement a filesystem or network filesystem server which acts as a proxy, monitoring all file operations and mapping them onto an underlying filesystem. Some filesystem subsystems, as in Windows NT, have explicit support for filters to capture all file operations, for use by virus scanners and backup utilities. To detect conflicts, the system must know which build task is performing each file operation, usually inferred from the process ID. The technique extends easily to distributed build systems.

This approach was the primary means of capturing dependencies in Vesta, and is simple to deploy (although it typically requires superuser access). Its main disadvantages are that it only captures operations on files and only at whole-file granularity, it must be applied to every filesystem a build process could possibly access, and that the file API is typically at an inappropriate level of abstraction, yielding too much information on each call.

\subsection{System call interception}

On typical MMU-based systems, all access to shared state by a process passes through system calls, which can be intercepted either through binary rewriting or through kernel support for system call interception such as \emph{ptrace}. Unlike file system filtering, system call interception can capture all access to shared state including all filesystems, the network, and kernel data structures (with some minor exceptions like RDTSC, which can be disabled). 

One obstacle with system call interception is that typical build tools generate very high volumes of system calls, many of which are unimportant for dependency tracking. In experiments, handling all system calls with a central \emph{ptrace} monitor process led to crippling overhead. Binary rewriting suffers from a different performance issue: load-time rewriting is too expensive for short-lived processes, necessitating on-disk caching of instrumented binaries. Kernel patches (for ptrace) or in-process filtering (for binary rewriting) can reduce the number of system calls, but is more difficult to implement and deploy and less flexible than minimum information libraries.

A more fundamental obstacle with system call interception is that applications routinely invoke system calls that return more information than they require. For example, UNIX applications testing for the existence of a file routinely use the \emph{stat()} system call, which also returns the last modified and last accessed time of the file, which change frequently. Another daunting case is environment variables, which are passed to new processes as a complete array; there is no way to determine which ones are used through the system call interface. Similarly, an application may read in a database file just to use one row of a table, or (as was observed in some open-source tools) cache the contents of a directory to accelerate future queries. To ensure correctness, the build system must assume all the information available to the process is used by it, which leads to unacceptable performance. Dynamic taint tracking,~\cite{dynamictainttracking} used to track the flow of untrusted data in security applications, could be used to trace the flow of system call results in-process, but has high overhead, and may fail to accurately track complex cases, such as an array of environment variables being transformed into a hash table data structure.

\subsection{Minimum information libraries}

A \emph{minimum information library} is a library designed to supply the minimum information that will be used by the caller and no other information, even in case of error. For example, whereas a POSIX application may use \emph{stat} or \emph{fopen} to determine if a file exists, a minimum information library would supply a \emph{fileExists} method returning a boolean. It would only return true if the file exists, or false if it doesn't exist or is inaccessible. Similarly, environment variables would be accessed through \emph{get} and \emph{set} functions instead of by parsing the environment block. These expose fine-grained dependencies in the application while still making the same number of system calls under the covers.

Minimum information libraries have a natural correspondence to resources as defined in this work: every resource can have an associated call in the library that reads and (where applicable) writes that resource. Other calls may read or write multiple resources.

A minimum information library can be easily instrumented to acquire one or more resources with every call, or to acquire a single resource to serve many calls, avoiding a proliferation of acquisitions. It can either save this information for later analysis, or contact a central build manager process to acquire a lock on the resource. By eliminating or wrapping all library calls that invoke the kernel, all access to shared state can be directed through the minimum information library, ensuring that all dependencies are systematically tracked.

When an application is written against a minimum information library, dependency tracking is simplified, but for many build tools that are either binary-only or managed by third parties, porting to another runtime library is a poor investment. For cases like these, a promising alternative is the \emph{build wrapper}, a small tool using a minimum information library that replaces the tool and acquires any needed resources, then invokes the underlying tool normally. Such a wrapper often requires only a small subset of the functionality implemented by the full build tool.

Unlike the other solutions above, minimum information libraries require some work to be done for every build tool, including application-specific build tools, and bugs in this code can lead to build unreliability. However, the number of build tools in a build is very small compared to the number of build tasks, typically ranging from 1 to 50. For widely-used tools like \emph{gcc}, the work can be shared among many users of the tool and developed to maturity, while application-specific build tools tend to be very simple, with dependencies inferrable from the command line alone.

\section{Change detection}

The change detection problem is the problem of capturing changes to shared state \emph{between builds}, for the purpose of implementing incremental builds. Traditional build systems like \emph{make} rely on comparing timestamps between task input and output files to determine if a task needs to be re-run. This is overconservative, in that unmodified files may have updated timestamps; incorrect, in that tasks may not be run if timestamps travel backwards (as when restoring from a backup); and inefficient, in that all tasks and all their input files must be examined even for a small incremental build. New build managers like \emph{scons}~\cite{scons} rely on hashes of file contents to detect changes, fixing the first two problems at the expense of even more inefficiency. Moreover, both these approaches are ineffective for resources other than simple files.

Ideally, change detection should log exactly which resources in the chosen resource space are modified at the moment they are modified, making their retrieval trivial. This would be straightforward if all applications were written against the same minimum information library as the build tools, but this is infeasible in practice because development tools are generally third party and difficult to wrap due to being interactive and long-lived.

Some kernels support keeping a log of all modified files, including NTFS's USN change journal~\cite{USN} and Linux with Stefan Büttcher's fschange patch.~\cite{fschange} Combined with a resource database that tracks old values of resources, these can be used to detect changes to filesystem-based resources as soon as they occur. ZFS uses Merkle trees to efficiently track hashes of the contents of all files at all times, for integrity and de-duplication, but this information is not user-accessible without a patch. Network-based resources can be intercepted by packet sniffers, at some overhead.

\section{Specifying and inferring dependencies}

\subsection{Manual dependency specification}

Although primitive, manual dependency specification offers a transparency and flexibility difficult to achieve with other methods. If coarse-grained tasks are used (see section~\ref{taskandresourcegranularity}), dependencies don't have to be updated too often, easing the maintenance burden. In this scenario, the primary function of the build manager is to detect invalid builds, with error messages suggesting how to repair the build description file. It can also optionally warn about redundant dependencies.

\subsection{Phased dependency specification}
\label{phaseddependency}

An extension of manual dependency specification is to have a build that proceeds in phases, where earlier phases generate dependencies used by later phases. A simple example of this is the typical integration of \emph{make} with \texttt{gcc~-M}, where dependency files are generated from source files in the first phase, and in the second phase source files are compiled using those dependencies. This can be extended to more phases in scenarios where tools must first be built to generate dependencies. Because each phase can be parallelized and incrementalized separately, this approach can be similar in performance to the manual approach. Some degree of interleaving may be possible, but caution is required to ensure that no dependencies become available after the point where they are needed (or alternatively, rollback may be used in this case---see section~\ref{transactions} below).

\subsection{Offline dependency graph augmentation}
\label{offlinedependency}

An alternate strategy is to infer dependencies based on the conflicts observed in an invalid build. If two tasks conflict but there is no directed path between them, the system can add an edge between them, but needs more information to infer the direction of the edge. One simple way to supply this information is to give a serial ordering of all tasks---then if A and B conflict, whichever comes earlier in the serial order is run first. In the case of dynamically scheduled tools, such a serial order can be inferred after the fact from any deterministic walk over the task execution tree of the build. Once the graph is updated, the build is re-executed (invalidating the conflicting tasks to force them to re-execute), and this process is repeated until a valid build is observed. Termination is guaranteed because eventually the dependency graph will contain a path through all tasks, and so necessarily be valid.

Inferred dependencies are stored as derived files that can be shared between developers (via a derived file cache, or simply through version control). For this reason, invalid builds are expected to occur infrequently, only when source files change in a way that adds dependencies.

Because the serial ordering is used to direct dependencies, the parallel build that results from this algorithm will produce the same final result as a sequential build of the serial ordering (per the theorem of Appendix~\ref{validbuildproof}). Such a build is predictable, easy to test, and easy to conceptualize for the developer. Compared to manual dependency specification, dependency inferrence allows more concise build description files that require less frequent updating. However, unforeseen conflicts may lead to excessive edges and build bottlenecks.

A challenging problem for this strategy is determining when to remove inferred dependencies. The build can easily detect when there is no conflict between two tasks, but it is difficult to establish whether the lack of conflict is a short-lived or long-lived phenomenon. For example, in a C++ project, there may be a certain header file which is only included in debug builds, resulting in dependencies that appear in debug builds but not in release builds. One simple strategy is to periodically erase all inferred dependencies and re-run the build to reproduce them.

\subsection{Transaction-based task synchronization}
\label{transactions}

Another strategy is to prevent any invalid builds from occurring by inferring dependencies on-the-fly at runtime. Using concepts from database transactions, we lock resources before accessing them by submitting a lock request to the build manager process. If the resource is already locked, the task is blocked until it is available. Tools with build wrappers can lock all necessary resources before invoking the real tool. However, once locks are in use deadlock is possible, and to make progress tasks must support abort and rollback, which kills the task and undoes its previous effects to the shared state.

By itself, this algorithm will yield an unpredictable ordering of conflicting tasks, leading to nondeterminism in build outputs. Suppose we wish instead to produce the same final output as the sequential serial build. In this case, we can employ a version of multiversion timestamp concurrency control~\cite{MultiversionTimestampCC}, placing each task inside a transaction with a virtual timestamp equal to its order in the serial build. If a task observes a value that was written by a task with a later timestamp, this is termed \emph{physically unrealizable behavior}, and forces an abort and rollback of the reader and any tasks influenced by its writes directly or indirectly (ordinary multiversion timestamping rolls back the writer, but in our scenario this can lead to a failure to make progress). Unlike the pessimistic locking strategy above this is an optimistic strategy, and so avoids blocking tasks at the cost of more frequent restarts.

\section{Task and resource granularity}
\label{taskandresourcegranularity}

Fine-grained tasks allow incremental builds to avoid redundant work and parallel builds to run more tasks in parallel. Generally the most fine-grained task possible is an execution of a build process, since such tasks cannot be easily subdivided. However, the intuitive association of a single process with a task may be counterproductive: a large number of processes leads to a large number of tasks and a large dependency graph which takes more time to construct and analyze. By partitioning this graph and collapsing each partition to a single task, the graph size can be dramatically reduced with only a modest increase in incremental build times. There is also little to no decrease in parallelism in practice, either because the reduced build is still capable of saturating the hardware's parallelism capacity, or because individual build tools support parallel execution. One typical strategy for accomplishing this is switching from a ``file-based'' compilation method to a ``module-based'' method, where entire directories are compiled into static/shared libraries or binaries in a single step. Some build tools, like the Microsoft Visual C\# compiler, exclusively use this approach.

Along with a decrease in graph size, the frequency of updates to the dependency graph is lowered, making manual graph maintenance more feasible and leading to a smaller number of rebuilds.

Similarly, the intuitive fine-grained association of resources with individual files can be counterproductive. For example, every task has a set of ``owned'' resources that only that task depends on, which can be collapsed into a single resource without increasing build times. If the tasks are coarse-grained, this can substantially reduce graph size. Another important case is the set of system resources, such as build tool executables, that are rarely updated and used by nearly all tasks. By collapsing rarely-updated, widely-used resources into a single resource, an enormous number of dependency edges are eliminated, and long incremental builds are only needed during a system update---at which time a full rebuild is needed anyway.

Decreasing graph size decreases overhead differently depending on the system used. In a lock-based system with a central build manager, it results in less lock and unlock operations and less interprocess communication. In a system that logs dependencies, it leads to fewer and smaller log files and less time loading them. In a system that performs static DAG scheduling, the scheduling algorithm runtime is reduced and an improved schedule may become feasible. These optimizations are essential to ensure that build overhead does not dominate build time.

In order to achieve these gains, the partitioning must be known and available to all tasks before the build begins. Both task and resource partitioning can be inferred by analysis of the dependency graphs of previous builds. Task partitioning can also be specified implicitly by describing each task using a command sequence or script that performs all necessary actions for that task. Resource partitioning can be specified manually, e.g. by using directory patterns to distinguish application and system resources. A promising hybrid approach that both limits incremental build time and keeps graph size small is to automatically use smaller partitions for resources that are modified frequently (e.g. the module the developer is currently working on) and increasingly larger partitions for resources that are modified less frequently.

\section{Preliminary experimental results}

Three prototype build systems were constructed.

In the first, a ptrace-based prototype that could only perform full builds, a pessimistic locking scheme was used where build processes took locks on any files they accessed. Processes also took ``predicted locks'' on any files they accessed during previous builds; predicted locks cause processes with later timestamps (which occur later in the serial build) to block if they attempt to lock the file. This allows cascading rollback to be avoided. Given enough concurrent processes, this build scaled to 85\% the time of a parallel \emph{make} build of the Linux kernel. However, it was not a complete system, as it was unable to handle unexpected new dependencies, could not perform incremental builds, and inferred its list of processes to execute from a prior \emph{make} run, making it necessary to rerun the \emph{make} build whenever this process sequence was changed.

The major performance bottleneck in this prototype was the necessity for the central build monitor process to sequentially handle all ptrace messages. A variant of this prototype used binary rewriting based on Jockey~\cite{jockey} to track system calls without the use of ptrace. Jockey rewrites binaries at load time by searching for system calls, and also keeps a cache of patches to apply for binaries it's seen before. In practice, even with caching, the system added too much overhead to be practical due to the Linux kernel build's enormous number of short-lived processes like \emph{cp} and \emph{mkdir}. This is less likely to be an issue in a more monolithic build system.

The second prototype was based on multiversion timestamping and was able to handle process hierarchies. Instead of replacing \emph{make}, \emph{make} is run sequentially and children of \emph{make} are run speculatively, pretending to succeed so that \emph{make} will continue and begin the next process. Rollback was implemented by performing all writes in a temporary location and then committing them after a process completes, which can be accomplished by rewriting results of system calls (a simple form of filesystem virtualization). Although the system was able to run real-world builds, and was powerful enough to complete builds even given no initial dependency information at all, the overhead of its transaction management and filesystem virtualization prevented it from scaling to larger builds, particularly since the build manager ran sequentially. On very small builds with few dependencies, it could outperform a sequential \emph{make} build by 30\% while offering the same results and reliability, but even on medium-sized builds this performance advantage was lost. In neither case could it compete with parallel \emph{make} builds.

The system also supports reliable incremental builds: it keeps a cache similar to Vesta's \emph{runtool cache}, and whenever a process is re-executed with the same inputs as in a prior run, it skips running the process and commits its cached results. Although its incremental builds are much faster than its full builds, they are not competitive with incremental builds by \emph{make}, for several reasons: the main \emph{make} process is still run as it would be in a full build, input files have to be hashed to implement the cache reliably, and the filesystem virtualization (particularly committing cached results) is expensive.

The third prototype abandoned transactions and system call tracing in favor of cooperation with build tools. A variety of open-source build tools were instrumented to declare their dependencies at runtime using a C library called \emph{deptracker}, which then wrote them out to an XML file when the process exited. An offline analysis step would then load all of these, detect conflicts, and (together with sequential build order information logged by an instrumented \emph{make} tool) generate a supplementary Makefile to augment the existing dependency graph. Initial performance evaluation with small builds showed that the time needed to load and process the XML files was a substantial portion of build time, as much as 30\% of the build, suggesting that a coarser granularity of tasks and/or resources is needed to accelerate this stage.

Another challenge for this prototype was the impracticality of maintaining a forked and instrumented codebase for every build tool used by a build, including many like \emph{gcc} with much larger builds themselves than the build under evaluation---effective build wrappers could mitigate this problem.

\section{Conclusion and future work}

This work discussed design options for constructing a reliable build system and highlighted tradeoffs between them, but many of the ideas remain untested. A clear next step is building a complete build manager that can handle a real-world large build, including change detection and dependency inferrence, and measure overhead compared to existing solutions. Developing a meaningful performance testing method for incremental builds is another challenge. Expanding the model and giving design options to support distributed builds would be valuable.

Incorporating features of Vesta, such as a shared derived file cache and repeatable builds via integration with existing version control systems would be another intriguing direction. Taking this to extremes, it may be valuable to have a ``cloud cache'' that shares derived files for building open-source projects among developers throughout the world.

Some of the concepts that are useful for reliable build systems can also be applied in other domains. For example, because minimum information libraries allow resource dependencies of code segments to be reliably and precisely identified, they can be used to compute information transfer from one portion of a program to another through shared state, which is often overlooked by dynamic analysis tools.

Finally, there is a great deal of practical work needed to get a functional reliable build system into the hands of everyday users, including supporting major tools and environments, providing an expressive build description language, and pushing for better change detection support in mainstream kernels.

\section{Acknowledgements}

The authors wish to thank: the Compaq research team, for providing a fundamentally new design in the space of build systems; David Wagner, for advice regarding coping with concurrent accesses to files and for suggesting other faculty; Maria Welborn, for advice regarding system call virtualization via system call rewriting; Ras Bodik for providing assistance with funding; Philip Reames, for ideas on reusing concepts in other domains; Eric Brewer, for providing suggestions regarding file operation interception and feedback on evaluation; and software developers at Microsoft, UC Berkeley, and other organizations for providing feedback regarding their experiences with build systems.

{\footnotesize \bibliographystyle{acm}
\bibliography{apmake_techreport}}

\appendix
\section{Well-definedness of a valid configuration}
\label{validbuildproof}

A \emph{configuration} specifies the dependency graph and initial shared state for a build. Recall that a valid build is one where, for any pair of conflicting tasks, there is a directed path from one to the other in the dependency graph. We begin by showing a lemma:

\newtheorem{theorem}{Theorem}[section]
\newtheorem{lemma}{Lemma}[section]

\begin{lemma}
If a given build is valid, any other valid build with the same configuration produces the same final result.
\label{validbuildsameresult}
\end{lemma}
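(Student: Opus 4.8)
The plan is to show that every valid build with a fixed configuration can be rewritten, by a sequence of \emph{result-preserving transpositions} of adjacent accesses, into a single canonical serial schedule that depends on the configuration alone; since two valid builds therefore normalize to the same serial schedule, they reach the same final state. This is the build-system analogue of the classical fact that conflict-serializable transaction schedules are serializable, and the argument is a local-confluence / bubble-sort argument. Throughout I assume the build is finite (finitely many tasks, each performing finitely many accesses).

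First I would prove the key \textbf{commutation step}: if $a$ and $b$ are two consecutive accesses in a valid schedule, performed by distinct tasks $f$ and $g$ with no directed path between them in the dependency graph, then exchanging $a$ and $b$ changes neither the shared state reached after both accesses nor the value delivered to either read. The proof is a short case analysis using validity (no path between $f$ and $g$ implies they do not conflict, so no resource written by one is read or written by the other) together with the stipulation that a write leaves every other resource unmodified: read/read commutes trivially; read/write and write/read commute because the write touches no resource the other side accesses; write/write commutes because the two write-sets are disjoint. Crucially, one then observes that the transposition also preserves each task's \emph{local} history of reads, so that determinism forces every subsequent access of $f$ and $g$ (and of all other tasks, which are untouched) to replay identically; hence the entire suffix of the schedule, viewed as a sequence of task/type/resource/value tuples, and the final state are unchanged. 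One also checks the swap still respects the dependency-edge constraints, which holds precisely because $f$ and $g$ are incomparable.

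Next I would fix a topological order $f_1,\dots,f_n$ of the dependency graph and argue by induction on $n$ that any valid schedule normalizes to the serial schedule $\tau$ that runs all of $f_1$'s accesses, then all of $f_2$'s, and so on. Since $f_1$ is a source, no task has a directed path to it, so any access lying in front of a given $f_1$-access belongs either to $f_1$ itself or to a task incomparable with $f_1$; repeatedly applying the commutation step slides all $f_1$-accesses to the front while preserving the final state, and the remaining suffix is a valid build of the induced configuration $(G - f_1,\ s_1)$ where $s_1$ is the state reached after $f_1$ (validity is inherited because $f_1$ being a source means no path between two other vertices passes through it, so such paths survive in $G - f_1$). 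The induction hypothesis then normalizes the suffix to $f_2$'s accesses followed by $f_3$'s, etc., yielding $\tau$; since $\tau$ is determined by the configuration and the chosen topological order, every valid build produces $\mathrm{result}(\tau)$, which proves the lemma. (One may additionally note that swapping two adjacent incomparable tasks in $\tau$ is itself a result-preserving move, so the choice of topological order is also immaterial, though that is not needed here.)

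The main obstacle is making the commutation step fully rigorous, in particular the bookkeeping that a single transposition preserves each task's local read-history so that determinism can be invoked to conclude the untouched tail replays identically, and the verification that the swap respects the DAG constraints. Once that lemma is in hand, the bubble-sort of $f_1$ to the front and the induction on the number of tasks are routine.
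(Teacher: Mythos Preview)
Your proposal is correct and is essentially the same argument as the paper's: both normalize an arbitrary valid schedule to a fixed canonical serial schedule by repeatedly swapping adjacent accesses of incomparable (hence, by validity, non-conflicting) tasks. The only cosmetic difference is the decomposition---the paper first de-interleaves all tasks into \emph{some} topological order and then bubble-sorts whole tasks into the canonical order, whereas you pull the source $f_1$ to the front and induct on $G-f_1$---but the underlying commutation step and its justification are identical, and your version is if anything more explicit about why determinism lets the suffix replay unchanged.
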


\begin{proof}
Define the canonical access sequence as the sequence obtained by fixing some topological order and executing each task sequentially in that order. Given a valid build's access sequence, we will perform a series of swaps to transform it into the canonical sequence.

Suppose two tasks are interleaved (neither performs all its accesses before those of the other). Then there is not a directed path between them in the dependency graph, and since the build is valid, they must not conflict. Hence we can safely swap accesses to ensure that the two tasks are no longer interleaved. By doing this for all pairs of tasks, we get a sequential schedule which performs all of each task in some order $(t_1,t_2,\ldots,t_n)$ which is a topological sort of the dependency graph.

Any two tasks in a topological sort can be swapped unless there is an edge between them, and the result is still a topological sort. Such swaps can be used to transform the sequence into any other topological sort while preserving the final output, including the canonical sequence. Hence any valid build's access sequence produces the same final result: the result produced by the canonical sequence.
\end{proof}

We now generalize this to the stronger result:

\begin{theorem}
If a given build is valid, all builds with the same configuration are valid and produce the same final result. If a given build is invalid, all builds with the same configuration are invalid.
\end{theorem}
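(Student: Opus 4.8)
The plan is to bootstrap from Lemma~\ref{validbuildsameresult}, whose proof already shows that in a valid build $B$ of a configuration $(G,s_0)$ each task $t_i$ performs exactly the sequence of accesses $A_i$ that it performs in the canonical sequential execution $C$. The main step is to upgrade this to: whenever \emph{one} build $B$ of $(G,s_0)$ is valid, \emph{every} build $B'$ of $(G,s_0)$ has each task perform exactly $A_i$. Granting that, the theorem is immediate. A build's conflict relation depends only on which resources each task reads and writes, so $B'$ has the same conflict relation as $C$, hence as $B$; therefore $B'$ satisfies the same ``every conflicting pair is joined by a directed path in $G$'' condition that makes $B$ valid, so $B'$ is valid; and then Lemma~\ref{validbuildsameresult} gives that $B'$ produces the same final result as $B$. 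The second sentence of the theorem is the contrapositive: if some build of $(G,s_0)$ were valid, the argument just sketched would force \emph{all} builds of $(G,s_0)$ to be valid, so if even one build is invalid, none can be.

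First I would fix the valid build $B$, extract from the proof of Lemma~\ref{validbuildsameresult} the access sets $\{A_i\}$, and record two consequences of $B$'s validity. (a) If two tasks conflict with respect to $\{A_i\}$ then $G$ contains a directed path between them, and therefore — since in any valid build a directed path from $f$ to $g$ forces all of $f$'s accesses to precede all of $g$'s — those two tasks are never interleaved, in $B'$ or in any build. (b) Any two tasks that both write a common resource $r$ conflict, hence are comparable in the reachability order of $G$; so the set of tasks that write $r$ forms a chain in that order, and every topological order of $G$ — in particular the canonical one and the order induced by any $B'$ — agrees on which of those tasks writes $r$ last.

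Then I would prove the key claim — every task performs exactly $A_i$ in $B'$ — by induction along the access sequence of $B'$, or equivalently by considering the first access at which $B'$ could depart from $C$. By determinism a task's writes and the identity of its next access are fixed once its prior reads agree with $C$, so a departure can occur only at a read; hence it suffices to show that whenever a task $t_i$ reads a resource $r$ in $B'$, it sees the value $r$ holds at the corresponding point of $C$. That value in $B'$ is whatever the most recent preceding write to $r$ wrote; by fact (a) every task that writes $r$ is either an ancestor of $t_i$ in $G$ (so it has already run to completion at this point of $B'$) or a strict descendant (so it has not yet started), and by fact (b) the last $r$-write among the ancestors is performed by the same task, writing the same value (by the induction hypothesis), as the last $r$-write among $t_i$'s predecessors in $C$; the subcase in which $t_i$ has itself already overwritten $r$ is handled the same way. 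Pinning down the last writer of each resource observed by a read under an arbitrary schedule is the crux of the argument; everything else is bookkeeping, and the only real subtlety is keeping the two roles of ``valid'' distinct — the path-existence condition on conflicting pairs, versus the scheduling constraint that dependent tasks are never interleaved.
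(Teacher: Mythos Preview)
Your argument is correct and takes a genuinely different route from the paper's. The paper argues by contradiction: it supposes an invalid build $(a_k)$ and a valid build $(b_k)$ of the same configuration coexist, then iteratively transforms $(a_k)$ toward $(b_k)$ by moving the access $a_j=b_i$ up to the first point of divergence, showing that each iteration preserves invalidity (either the move involves no conflict and nothing changes, or the move crosses a conflicting access $a_m$, and the very fact that $a_j$ and $a_m$ appear in both orders across two feasible builds shows their owning tasks have no path between them in $G$, so the re-executed build is again invalid); since the common prefix grows each time, eventually $(b_k)$ itself is reached and declared invalid, a contradiction. You instead argue directly: fix one valid build, extract the per-task canonical access sequences $A_i$, and prove by induction along an arbitrary build $B'$ that every task is forced to replay exactly $A_i$ with the same read values, by pinning down the last writer of each read resource using the chain structure of writers in $G$. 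What your route buys is a structural explanation of \emph{why} validity propagates---each read in $B'$ is forced to see the canonical value, so determinism locks in the canonical behavior---and the ``same final result'' clause then falls out without a second swap argument. What the paper's route buys is methodological uniformity with the swap-based proof of Lemma~\ref{validbuildsameresult}, and it avoids having to reason globally about which writers have or have not completed at a given moment of $B'$, since it only ever compares two adjacent accesses.
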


\begin{proof} By Lemma~\ref{validbuildsameresult}, if all builds are valid, they all produce the same final result. It remains to show a single configuration cannot generate both a valid and an invalid build.

Suppose we have an invalid build $(a_1,a_2,\ldots,a_n)$ and a valid build $(b_1,b_2,\ldots,b_n)$, both with a given access sequence. We will gradually transform the first into the second.

We find the first point at which they diverge $a_i \ne b_i$, locate $a_j$ such that $a_j=b_i$, and move it up to the $i$th position by a series of swaps. If $a_j$ did not conflict with any of $a_{i+1},\ldots,a_{j-1}$, then the behavior of all tasks is preserved: the new access sequence is a feasible build, and is valid if and only if the previous sequence was valid.

Suppose on the other hand $a_j$ does conflict with at least one of $a_{i+1},\ldots,a_{j-1}$; let the first be $a_m$. Because swapping $a_j, a_m$ may change task behavior, the build must be conceptually re-executed starting after $a_m$ to get a feasible new access sequence. In the previous iteration, $a_j$ followed $a_m$, whereas in the current iteration $a_j$ precedes $a_m$; this implies the two tasks owning these accesses have no directed path between them in the dependency graph. But $a_j, a_m$ conflict, so the new build is invalid.

In either case, the common prefix of the two builds grows by at least one access with each iteration, and eventually the build $(b_k)$ is reached. However, in both cases the invalidity of the original build is preserved, so $(b_k)$ is invalid as well. This is a contradiction, so there cannot be both an invalid and a valid build.
\end{proof}

This means the definition of a \emph{valid configuration} as a pair (dependency graph, start state) producing valid builds is well-defined.

\end{document}